\documentclass[a4paper,12pt]{article}
\usepackage[utf8]{inputenc}
\usepackage[english]{babel}
\usepackage{graphicx}
\usepackage{mathptmx}
\usepackage{enumitem}
\usepackage[onehalfspacing]{setspace}
\usepackage[headheight=1in,margin=1in]{geometry}
\usepackage{lipsum}

\usepackage[utf8]{inputenc}
\usepackage{textgreek}
\DeclareUnicodeCharacter{03B5}{\textepsilon}

\usepackage{amsmath,amssymb,amsthm,amsfonts}
\usepackage[authoryear,sort]{natbib}
\usepackage{titlesec}
\titlespacing{\paragraph}{0pt}{0pt}{1em}

\usepackage{subcaption}
\usepackage{float}

\usepackage{tikz}
\usetikzlibrary{positioning, arrows.meta}

\usepackage[hidelinks]{hyperref}

\usepackage[bottom,hang,flushmargin]{footmisc}
\usepackage{xcolor}
\definecolor{sunset}{RGB}{238,93,108}

\newcommand{\PP}{\mathbb{P}}
\newcommand{\EE}{\mathbb{E}}

\graphicspath{{figures/}}

\newtheoremstyle{customassump}
  {}{}                          
  {\itshape}                   
  {}                           
  {\bfseries}                  
  {}                           
  {.5em}                       
  {}                           
\theoremstyle{customassump}
\newtheorem{assumption}{Assumption}
\newtheorem{theorem}{Theorem}

\newtheorem{proposition}{Proposition}
\newtheorem{lemma}{Lemma}
\newtheorem*{proposition*}{Proposition}

\title{\vspace{-1in}\singlespacing Anytime-valid Optimal Policy Identification}

\author{Daniel Molitor\thanks{Department of Information Science, Cornell University, djm484@cornell.edu.}}
 
\date{}

\begin{document}

\maketitle

\begin{singlespacing}
\begin{abstract}
We develop an anytime-valid framework for optimal policy identification from logged contextual bandit data. In many applied settings, the analyst wants to select the optimal policy from a candidate policy class \(\Pi\), but data are generated by an externally determined logging policy that they do not control. The analyst may also wish to monitor evidence continuously, stopping once the optimal policy is clear rather than committing to a fixed sample size in advance. This paper addresses these challenges by constructing a time-indexed set \(S_t\) that retains the true optimal policy set uniformly over time with high probability. The resulting procedure allows the analyst to monitor policy values, eliminate clearly suboptimal policies, and stop at data-dependent times without invalidating inference. When the optimal policy is unique, we define a stopping time for its identification and derive a sample-complexity bound scaling as \(
O\!\left(\frac{\log |\Pi|+\log\log(1/\Delta_{\min})}{\Delta_{\min}^2}\right)
\),
where \(\Delta_{\min}\) is the gap between the best and second-best policy values. Simulations demonstrate that the anytime-valid approach can yield substantial sample savings relative to fixed-\(N\) designs. An application to a large adaptive experiment on reducing misinformation online illustrates how the method provides a dynamic view as evidence on the optimal policy accumulates.
\end{abstract}
\end{singlespacing}

\noindent \textit{Keywords: optimal policy identification, anytime-valid inference, contextual bandits, off-policy evaluation, adaptive experiments}

\newpage

\section{Introduction}

Personalized treatment assignment policies are increasingly used to improve outcomes in experimental settings where individuals respond differently to the same interventions.
\citet{doss_more_2019} show that text messages to parents, personalized to their child's developmental level, can produce meaningful educational gains; \citet{offer-westort_battling_2024} find that
personalized nudges to consider the accuracy of online information
substantially reduce sharing of misinformation on social media.
In these examples, and in a growing body of work across education, public health, and digital platforms, tailoring treatment assignment policies to individual characteristics yields measurably better outcomes than non-personalized alternatives.
In many settings, these personalized policies are learned from sequentially collected data, with the policy evolving over time as evidence accumulates.

The contextual bandit framework is a canonical abstraction for these settings \citep{li_contextual-bandit_2010}. Informally, in this framework a decision-maker observes contextual information \(X_t \in \mathcal{X}\) for subject \(t\) (such as country, age, gender, etc.), chooses an action \(A_t \in \mathcal{A}\) (such as which online safety nudge to provide), and then observes a reward \(R_t\) (such as whether the subject shares misinformation on social media). A
policy \(\pi(a \mid x)\) is simply a conditional distribution over actions given a user's context \(X_t\). This paper addresses a question that naturally arises when a finite collection of candidate policies \(\Pi = \{\pi_1, \dots, \pi_m\}\) is under consideration: \emph{which policy is optimal?} Formally, letting \(\nu(\pi) := \mathbb{E}_\pi[R]\) denote the expected reward of a policy \(\pi\), the goal is to identify \(\pi^\star := {\arg\max}_{\pi \in \Pi}\, \nu(\pi)\).
A substantial literature studies optimal policy identification under adaptive sampling, where data collection is actively steered toward identifying the optimal policy \citep{li_instance-optimal_2022, zanette_design_2021}. These approaches build on the broader pure-exploration and best-arm identification literature, especially in structured and linear settings \citep{garivier_optimal_2016, soare_best-arm_2014, fiez_sequential_2019, jedra_optimal_2020, degenne_gamification_2020, tirinzoni_elimination_2022}.
This line of work assumes that the analyst controls data collection and can steer it toward the goal of optimal policy identification.

In many practical settings, however, this may not be the case. A digital platform may be running a production bandit algorithm tuned to its own engagement objective; a health system may operate under a treatment assignment protocol subject to regulatory or contractual constraints; a research team may be limited to using data generated by a separate team whose data collection process it cannot modify. In each case, the analyst is downstream of an externally determined policy \(h\) that generates the observed data---the logging policy---which may range from a sophisticated contextual bandit
to simple fixed-probability randomization. The analyst must then identify \(\pi^\star\) from the data stream generated by \(h\), without the ability to intervene
on \(h\). This task is inherently off-policy: the learning problem is no longer how to optimally collect data to identify \(\pi^\star\), but how to identify \(\pi^\star\) using the data collected under \(h\).

Additionally, much of the optimal policy identification and off-policy evaluation literature focuses on fixed-\(N\) or post-hoc approaches to policy evaluation. The resulting tools can estimate policy values \(\nu(\pi)\) and identify optimal policies \(\pi^\star\), but their inferential guarantees hold only at a fixed, prespecified sample size \citep{li_instance-optimal_2022, zanette_design_2021, dudik_doubly_2011, karampatziakis_off-policy_2021, bibaut_post-contextual-bandit_2021, hadad_confidence_2021, lee_off-policy_2024}.
From a practitioner’s perspective, however, this can be restrictive. In many applications, an analyst would like to monitor policy values as data accrue, progressively eliminate clearly suboptimal policies, and continue collecting data only when the evidence remains ambiguous.
A valid inferential framework for this goal must remain correct not only at a prespecified sample size, but uniformly over all sample sizes, including arbitrary data-dependent stopping times.
This is precisely the framework offered by anytime-valid inference \citep{howard_time-uniform_2021, waudby-smith_anytime-valid_2024}, which provides inferential guarantees that remain valid throughout the data collection process.

Building on the anytime-valid off-policy evaluation framework of \citet{waudby-smith_anytime-valid_2024}, we develop inferential tools for optimal policy identification under an externally determined logging policy. The resulting guarantees on the optimal policy set are nonasymptotic, nonparametric, and uniformly valid over all sample sizes. Analysts can therefore monitor evidence as it accrues, progressively eliminate clearly suboptimal policies, and stop at arbitrary data-dependent times without invalidating inference. The framework accommodates a broad class of stochastic logging policies, from fixed randomization to adaptive contextual bandit algorithms.

\subsection{Contributions}
The paper makes two contributions.

\begin{enumerate}[wide, labelindent=0pt]

\item We construct an anytime-valid optimal policy candidate set \(S_t \subseteq \Pi\), which at each time \(t\) retains the policies that remain plausibly optimal given the evidence observed so far. We show that \(S_t\) satisfies the uniform coverage guarantee
\[
\PP\!\left( \Pi^\star \subseteq S_t \;\text{ for all } t \geq 1 \right) \geq 1-\alpha,
\]
where \(\Pi^\star := \left\{\pi\in\Pi:\nu(\pi)=\max_{\pi'\in\Pi}\nu(\pi')\right\}\) is the set of optimal policies and \(\alpha\) is a user-chosen significance level. That is, with high probability simultaneously over all \(t\), \(S_t\) retains every optimal policy while progressively eliminating suboptimal policies as data accumulate.

\item When the optimal policy \(\pi^\star\) is unique, we define a data-driven stopping time \(\tau\) at which the optimal policy candidate set \(S_t\) uniquely identifies \(\pi^\star\). We show that this stopping time occurs with high probability and derive a corresponding sample-complexity bound that scales with \(\log |\Pi|\), \(\log\log \Delta_{\min}^{-1}\), and \(\Delta_{\min}^{-2}\), where \(\Delta_{\min} := \min_{\pi \neq \pi^\star}\{\nu(\pi^\star)-\nu(\pi)\}\) denotes the minimum suboptimality gap.

\end{enumerate}

\subsection{Related work}

\paragraph{Best-arm and policy identification}

A large literature studies how to adaptively collect data to identify the best arm or policy. Foundational work develops fixed-confidence guarantees for best-arm identification \citep{garivier_optimal_2016, soare_best-arm_2014, degenne_gamification_2020, tirinzoni_elimination_2022}. Closely related work extends these ideas to identification in structured, linear, and contextual settings \citep{fiez_sequential_2019, jedra_optimal_2020, zanette_design_2021, li_instance-optimal_2022}.
The setting considered here differs in a central respect: the analyst does not control the logging policy and must instead identify the optimal policy from logged data generated by an externally determined data-collection process.

\paragraph{Off-policy evaluation}

Off-policy evaluation studies how to estimate the value of a target policy using data collected under a different logging policy, with contextual bandits serving as a primary setting for such questions. Within this literature, closely related work includes doubly robust and adaptive approaches to off-policy evaluation \citep{dudik_doubly_2011, wang_optimal_2017}, off-policy selection using confidence bounds at a fixed horizon \citep{kuzborskij_confident_2022}, and statistical inference under adaptive data collection \citep{hadad_confidence_2021, bibaut_post-contextual-bandit_2021, zhang_statistical_2021, lee_off-policy_2024, leiner_adaptive_2026}. These papers address the inferential challenges of logged data, but typically in a fixed-\(N\) framework, where validity is tied to a prespecified sample size. The setting considered here differs: we seek to identify the optimal policy under sequential monitoring, with guarantees that remain valid uniformly over time.

\paragraph{Anytime-valid inference and off-policy evaluation}

This work is also closely related to the broader literature on anytime-valid inference, confidence sequences, and sequential experimentation \citep{howard_time-uniform_2021, johari_always_2022, lindon_anytime-valid_2025, molitor_anytime-valid_2025}. In contextual bandits, \citet{karampatziakis_off-policy_2021} and \citet{waudby-smith_anytime-valid_2024} develop confidence sequences for off-policy evaluation. We build directly on this line of work, but shift the inferential target from individual policy values to optimal policy identification. This yields a procedure for anytime-valid optimal policy identification from logged contextual bandit data under an externally determined logging policy.

\section{Setup and background}

We begin by formalizing the contextual bandit setup, the off-policy quantities of interest, and the confidence sequences that underpin our identification procedure. Much of the notation in this section follows \citet{waudby-smith_anytime-valid_2024},
tailored to our setting.

\subsection{Contextual bandit setup}

Let \(\mathcal X\) denote the context space and \(\mathcal A\) the action space.
We observe a sequential stream of contextual bandit data
\((X_t,A_t,R_t)_{t\ge1}\),
where at each time \(t\), a context \(X_t \in \mathcal X\) is observed, an action \(A_t \in \mathcal A\) is assigned, and a reward \(R_t\) is realized. Throughout, we assume that \(R_t\in[0,1]\), a standard assumption in contextual bandits \citep{karampatziakis_off-policy_2021, thomas_high-confidence_2015}. We write
\(
\mathcal H_t := \sigma\!\big((X_i,A_i,R_i)_{i=1}^t\big)
\)
for the natural filtration generated by the observed data up to time \(t\).

Data are collected under a \emph{predictable} sequence of logging policies \((h_t)_{t\ge1}\). Formally, conditional on the current context \(X_t\) and past history \(\mathcal H_{t-1}\),
\[
A_t \mid (X_t,\mathcal H_{t-1}) \sim h_t(\cdot \mid X_t),
\]
where \(h_t\) is \(\mathcal H_{t-1}\)-measurable. Thus, the logging policy may adapt over time as a function of the observed history, with fixed logging policies arising as a special case. Let \(\Pi=\{\pi_1,\dots,\pi_m\}\) denote a finite class of candidate target policies, with \(m=|\Pi|\), where each \(\pi \in \Pi\) is a conditional distribution over actions given context. Our goal is to identify the value-maximizing policy in \(\Pi\).

\subsection{Policy values and the target of inference}

For each target policy \(\pi\in\Pi\), define its conditional value at time \(t\) by
\(
\nu_t(\pi):=\EE_{\pi}[R_t \mid \mathcal H_{t-1}],
\)
where the expectation is taken under the counterfactual intervention that draws \(A_t\) from \(\pi(\cdot\mid X_t)\) rather than from the logging policy \(h_t(\cdot\mid X_t)\). Throughout, we work in the \emph{time-invariant policy value} setting: for each \(\pi\in\Pi\), there exists a constant \(\nu(\pi)\in[0,1]\) such that \(\nu_t(\pi)=\nu(\pi)\) for all \(t\ge1\).
As above, let \(\Pi^\star\) denote the set of optimal policies in \(\Pi\). When the optimal policy is unique we write
\(\pi^\star := \arg\max_{\pi\in\Pi}\nu(\pi)\).
Thus, identifying \(\pi^\star\) reduces to obtaining reliable inference on the collection of policy values \(\{\nu(\pi):\pi\in\Pi\}\) using data generated under \((h_t)_{t\ge1}\).

\subsection{Importance weights and doubly robust pseudo-outcomes}

To estimate \(\nu(\pi)\) from data generated under \((h_t)_{t\ge1}\), we rely on the importance weights
\[
w_t(\pi):=\frac{\pi(A_t\mid X_t)}{h_t(A_t\mid X_t)},
\]
which give the ratio of the realized action's probability under the target policy \(\pi\) to its probability under the logging policy. We assume that each candidate policy \(\pi\in\Pi\) is absolutely continuous with respect to \(h_t\), so that \(w_t(\pi)\) is almost surely finite.

Following \citet{waudby-smith_anytime-valid_2024}, we combine these weights with a reward model to construct doubly robust pseudo-outcomes. Let \(\widehat r_t(x,a)\in[0,1]\) denote any predictive model of \(R_t\) built from the past history \(\mathcal H_{t-1}\), and let \(k\in[0,\infty)\) be a tuning parameter that controls the truncation of the weighted reward model fitted values. For each target policy \(\pi\in\Pi\), define the lower and upper doubly robust pseudo-outcomes by
\begin{align}
\phi_t^{\mathrm{DRL}}(\pi)
&:= w_t(\pi)\!\left(R_t-\Big[\widehat r_t(X_t,A_t)\wedge \tfrac{k}{w_t(\pi)}\Big]\right) \notag \\
&\quad + \EE_{a\sim\pi(\cdot\mid X_t)}
\!\left[\widehat r_t(X_t,a)\wedge \tfrac{k}{w_t(\pi)}\right], \notag \\
\phi_t^{\mathrm{DRU}}(\pi)
&:= w_t(\pi)\!\left((1-R_t)-\Big[(1-\widehat r_t(X_t,A_t))\wedge \tfrac{k}{w_t(\pi)}\Big]\right) \notag \\
&\quad + \EE_{a\sim\pi(\cdot\mid X_t)}
\!\left[(1-\widehat r_t(X_t,a))\wedge \tfrac{k}{w_t(\pi)}\right] \notag.
\end{align}
These pseudo-outcomes satisfy
\[
\EE\!\left[\phi_t^{\mathrm{DRL}}(\pi)\mid \mathcal H_{t-1}\right]=\nu(\pi),
\qquad
\EE\!\left[\phi_t^{\mathrm{DRU}}(\pi)\mid \mathcal H_{t-1}\right]=1-\nu(\pi).
\]
Crucially, this does \emph{not} require \(\widehat r_t\) to be correctly specified: the reward model can improve efficiency, but validity follows from the known logging probabilities and importance weights. We now use these pseudo-outcomes to construct anytime-valid confidence sequences for the policy values \(\nu(\pi)\).

\subsection{Policy value confidence sequences}
\label{sec:conf_sequences}

A sequence of random intervals \([L_t,U_t]_{t\ge1}\) is called a \((1-\alpha)\) confidence sequence (CS) for a parameter \(\theta\) if
\(
\PP\!\left(\theta\in[L_t,U_t]\ \forall t\ge1\right)\ge 1-\alpha.
\)
To carry out inference on the policy values \(\nu(\pi)\), we use the CSs of \citet{waudby-smith_anytime-valid_2024}. Fix a target policy \(\pi\) and a truncation level \(k\ge0\). Let \(\phi_t\) denote either \(\phi_t^{\mathrm{DRL}}(\pi)\) or \(\phi_t^{\mathrm{DRU}}(\pi)\), depending on whether a lower or upper confidence bound is being constructed. Define the scaled pseudo-outcomes \(\xi_t:=\phi_t/(1+k)\) and the associated variance process \((V_t)_{t=1}^\infty\) by
\begin{equation}
V_t(\pi):=\sum_{i=1}^t\big(\xi_i-\widehat\xi_{i-1}\big)^2, \qquad
\text{where} \qquad
\widehat\xi_t:=\left(\frac1t\sum_{i=1}^t\xi_i\right)\wedge \frac{1}{1+k}.
\label{eq:variance_process}
\end{equation}
Here \(\widehat\xi_0\in\big[0,(1+k)^{-1}\big]\) is chosen by the analyst.

\begin{proposition}[{\citet[Proposition 3]{waudby-smith_anytime-valid_2024}}]
\label{prop:lil_cs}
Let \((X_t,A_t,R_t)_{t=1}^\infty\) be an infinite sequence of contextual bandit data with rewards in \([0,1]\), generated by the logging policies \((h_t)_{t=1}^\infty\). Let \((V_t)_{t=1}^\infty\) be the variance process defined in \eqref{eq:variance_process}, and define the stabilized process \(\overline V_t(\pi):=V_t(\pi)\vee 1\). Define the function \(\ell_t(\alpha)\)
\[
\ell_t(\alpha):=2\log(\log \overline V_t(\pi)+1)+\log\!\left(\frac{1.65}{\alpha}\right).
\]
Then
\[
L_t(\pi;\alpha)
:=
(1+k)\left(
\frac1t\sum_{i=1}^t\xi_i
-
\frac{\sqrt{2.13\,\ell_t(\alpha)\,\overline V_t(\pi)+1.76\,\ell_t(\alpha)^2}}{t}
-
\frac{1.33\,\ell_t(\alpha)^2}{t}
\right)\vee 0
\]
forms a lower \((1-\alpha)\)-CS for \(\nu(\pi)\), meaning
\(
\PP\!\left(\nu(\pi)\ge L_t(\pi;\alpha)\ \forall t\ge1\right)\ge 1-\alpha
\).
An analogous upper CS \((U_t)_{t=1}^\infty\) is obtained by mirroring. That is,
\(
U_t := 1 - L_t
\),
with \(\phi_t^{\mathrm{DRL}}(\pi)\) replaced by \(\phi_t^{\mathrm{DRU}}(\pi)\) in all preceding quantities.
\end{proposition}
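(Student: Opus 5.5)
The plan is to reduce the statement to a time-uniform concentration inequality for a bounded martingale difference sequence, built from a nonnegative supermartingale and a stitching (or mixture) argument. The inequality is then inverted and rescaled by \((1+k)\).

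\textbf{Step 1 (boundedness and centering).} Write \(\mu:=\nu(\pi)/(1+k)\). From the stated conditional mean property,
\[
\EE[\xi_t\mid\mathcal H_{t-1}]=\mu .
\]
Next I would check the boundedness \(\xi_t\ge 0\), and that \(\xi_t\) is bounded above in a way that controls the lower tail. The truncation \(\widehat r_t\wedge k/w_t(\pi)\) makes \(w_t(\pi)\,[\widehat r_t\wedge k/w_t]\le k\). Since \(R_t\ge0\) and the \(\pi\)-expectation term lies in \([0,1]\), we get
\[
\phi_t\ge -k \quad\text{and}\quad \phi_t\ge 0-k+0 .
\]
The correct normalization is more delicate: I expect \(\phi_t\ge -k\), so \(\xi_t+k/(1+k)\ge 0\). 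The key point is that for a \emph{lower} bound on the mean we only need \(\xi_t\) bounded below, i.e.\ \(\mu-\xi_t\le c\) with \(c\le 1\) after scaling. This one-sided bound is what makes the self-normalized (empirical-Bernstein) argument go through even though \(w_t\) is unbounded above.

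\textbf{Step 2 (supermartingale).} For each \(\lambda\in[0,1)\) I would use the Fan et al.\ / Howard et al.\ inequality
\[
\exp(\lambda y-\psi_E(\lambda)y^2)\le 1+\lambda y, \qquad y\ge -1,\quad \psi_E(\lambda)=-\log(1-\lambda)-\lambda .
\]
Apply it to \(y=\mu-\xi_i\), with the square replaced by \((\xi_i-\widehat\xi_{i-1})^2\); here \(\widehat\xi_{i-1}\) is predictable and lies in \([0,1/(1+k)]\). The predictable-centering lemma of Howard et al.\ (2021, Thm.~4) then shows that
\[
M_t(\lambda)=\exp\Big(\lambda\sum_{i\le t}(\mu-\xi_i)-\psi_E(\lambda)V_t\Big)
\]
is a nonnegative supermartingale. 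This gives a sub-exponential uniform boundary with variance process \(V_t\).

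\textbf{Step 3 (stitching).} I would pass from a fixed \(\lambda\) to a curved boundary by the stitching theorem of Howard et al.\ (2021, Thm.~1). Split intrinsic time \(\overline V_t\) into geometric epochs \([\eta^j,\eta^{j+1})\). In each epoch, apply Ville's inequality to \(M_t(\lambda_j)\) at level \(\alpha_j=\alpha/h(j)\), with \(h(j)\propto (j+1)^2\) (an \(s=2\) polylog weighting). A union bound over epochs then yields, with probability \(\ge1-\alpha\) and for all \(t\),
\[
\sum_{i\le t}(\mu-\xi_i)\le \sqrt{2.13\,\ell_t\overline V_t+1.76\,\ell_t^2}+1.33\,\ell_t .
\]
Here the constants \(2.13,1.76,1.33,1.65\) arise from \(\eta=e\) (or the specific choice in Howard et al.) and the sub-exponential scale \(c=1\). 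The \(2\log(\log\overline V_t+1)\) term is the price of the \((j+1)^2\) weighting. Dividing by \(t\), multiplying by \(1+k\), and taking \(\vee 0\) (valid since \(\nu\ge0\)) gives \(L_t\le\nu(\pi)\) uniformly. The mirrored upper sequence follows by applying the same argument to \(\phi^{\mathrm{DRU}}\), whose mean is \(1-\nu(\pi)\).

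\textbf{Main obstacle.} The hardest step is verifying the constant bookkeeping in the stitched boundary so that it matches exactly the stated form. In particular, this means checking that the \(\ell_t^2\) factor in the final term is correct (a stitched sub-exponential boundary typically yields a linear \(c\,\ell_t\) term), and that the truncation indeed delivers a lower bound \(\xi_t\ge -k/(1+k)\) compatible with scale \(c=1\). I would first try to reproduce this by citing Howard et al.'s explicit stitched boundary with \(\eta=e\) and \(s=2\), and cross-check it against Waudby-Smith et al.'s derivation.
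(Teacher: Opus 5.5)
Your overall route is the one behind the cited result; the paper itself gives no proof beyond citing Waudby-Smith et al. That route is: a one-sided Fan-type inequality with predictable centering, giving a sub-exponential supermartingale with scale $c=1$ and variance process $V_t$; then Howard et al.'s stitched boundary; then rescaling by $1+k$ and clipping at $0$. However, Step~2 is set up on the wrong tail, and as written it fails.

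With $y=\mu-\xi_i$ and the square recentred at $\widehat\xi_{i-1}$, Fan's inequality needs $\widehat\xi_{i-1}-\xi_i\ge-1$. That is an \emph{upper} bound on $\xi_i$, which the unbounded weights $w_t(\pi)$ rule out. Your Step~1 bound $\mu-\xi_i\le 1$ is the other side and does not help. The supermartingale property genuinely breaks. For $\mu>0$, take $\widehat\xi_0=\mu$ and fix $0<b\le\mu$. Let $\xi_1=\mu+M$ with probability $q$ and $\mu-b$ otherwise, where $b=qM/(1-q)$ and $M\to\infty$. Then $\EE[M_1(\lambda)]\to e^{\lambda b-\psi_E(\lambda)b^2}>1$ for small $\lambda>0$.

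Even if $M_t(\lambda)$ were a supermartingale, Ville's inequality would only give $\sum_{i\le t}(\mu-\xi_i)\le B_t$, with $B_t$ the stitched boundary. That is $\nu(\pi)\le(1+k)\bigl(\bar\xi_t+B_t/t\bigr)$, where $\bar\xi_t:=\frac1t\sum_{i\le t}\xi_i$: an \emph{upper} confidence bound. So your last step, ``dividing by $t$ \dots\ gives $L_t\le\nu(\pi)$'', does not follow.

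The repair is to work with $S_t:=\sum_{i\le t}(\xi_i-\mu)$. Set $z_i:=\xi_i-\widehat\xi_{i-1}$ and $\delta_i:=\widehat\xi_{i-1}-\mu$. Since $\phi_i\ge -k$, you have $\xi_i\ge -k/(1+k)$. The cap in $\widehat\xi$ gives $\widehat\xi_{i-1}\le 1/(1+k)$; this is exactly the role of that cap, and $\widehat\xi_{i-1}\ge 0$ is neither true in general nor needed. Together these give $z_i\ge -1$, hence
\[
e^{\lambda(\xi_i-\mu)-\psi_E(\lambda)z_i^2}=e^{\lambda\delta_i}\,e^{\lambda z_i-\psi_E(\lambda)z_i^2}\le e^{\lambda\delta_i}(1+\lambda z_i).
\]
Since $\EE[z_i\mid\mathcal H_{i-1}]=-\delta_i$, the conditional expectation is at most $e^{\lambda\delta_i}(1-\lambda\delta_i)\le 1$. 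So $\exp(\lambda S_t-\psi_E(\lambda)V_t)$ is a nonnegative supermartingale for each $\lambda\in[0,1)$, and stitching bounds $S_t$ from above. The bound $S_t\le B_t$ rearranges to $\nu(\pi)\ge(1+k)(\bar\xi_t-B_t/t)$, which together with $\nu(\pi)\ge 0$ is $L_t\le\nu(\pi)$.

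On your ``main obstacle'', the constants come from Howard et al.'s stitched boundary with $\eta=e$, $s=2$, $m=1$ (hence $\overline V_t=V_t\vee 1$) and $c=1$, all rounded up:
\begin{itemize}
\item $k_1^2=(e^{1/4}+e^{-1/4})^2/2\approx 2.128$;
\item $k_2^2=\bigl((\sqrt e+1)/2\bigr)^2\approx 1.754$ and $k_2\approx 1.324$;
\item $\zeta(2)\approx 1.645$;
\item $\log\log(e\,\overline V_t)=\log(\log\overline V_t+1)$.
\end{itemize}
That boundary's final term is $1.33\,\ell_t$, linear as you suspected. The $\ell_t^2$ in the statement is presumably a slip. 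It is harmless whenever $\ell_t\ge 1$, which holds for all $t$ once $\alpha\le 1.65/e\approx 0.61$.
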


\noindent Combining the corresponding lower and upper one-sided CSs via a union bound
(with error level \(\alpha/2\) in each tail), we obtain a two-sided
\((1-\alpha)\)-CS for \(\nu(\pi)\), denoted
\[
\left[L_t(\pi;\alpha),\,U_t(\pi;\alpha)\right]_{t\ge1}.
\]

\section{Anytime-valid optimal policy identification}

\subsection{Constructing an optimal policy set}

Using the policy-value CSs introduced in the previous section, we next construct an anytime-valid set estimator of the optimal policy set \(\Pi^\star\).

Fix a probability threshold \(\alpha \in (0,1)\), and for each \(\pi \in \Pi\) let
\(
\left[L_t\!\left(\pi;\frac{\alpha}{m}\right),\ U_t\!\left(\pi;\frac{\alpha}{m}\right)\right]_{t\ge1}
\)
denote a valid \((1-\alpha/m)\) CS for \(\nu(\pi)\). By a union bound over \(\Pi\), these intervals satisfy the simultaneous coverage guarantee
\begin{equation}
\mathbb P\!\left(
\nu(\pi)\in
\left[L_t\!\left(\pi;\frac{\alpha}{m}\right),\ U_t\!\left(\pi;\frac{\alpha}{m}\right)\right]
\ \forall t\ge1,\ \forall \pi\in\Pi
\right)
\ge 1-\alpha.
\label{eq:simultaneous-cs-coverage}
\end{equation}

This simultaneous coverage suggests a natural elimination rule. At time \(t\), a policy \(\pi\) should remain under consideration if its upper confidence bound is still at least as large as the largest lower confidence bound attained by any competing policy. Equivalently, if another policy's lower bound exceeds \(\pi\)'s upper bound, then with high probability \(\pi\) is a suboptimal policy. The following result formalizes this intuition.

\begin{theorem}[Anytime-valid optimal policy set]
\label{thm:optimal-policy-set}
For each \(t\ge1\), define the optimal policy candidate set
\[
S_t
:=
\left\{
\pi\in\Pi:
U_t\!\left(\pi;\frac{\alpha}{m}\right)
\ge
\max_{\pi'\in\Pi}
L_t\!\left(\pi';\frac{\alpha}{m}\right)
\right\}.
\]
Then
\[
\mathbb P\!\left(
\Pi^\star \subseteq S_t\ \forall t\ge1
\right)
\ge 1-\alpha.
\]
\end{theorem}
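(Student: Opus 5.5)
The argument reduces to the simultaneous coverage guarantee \eqref{eq:simultaneous-cs-coverage}. Define the good event
\[
E := \left\{ \nu(\pi)\in\left[L_t\!\left(\pi;\tfrac{\alpha}{m}\right),\ U_t\!\left(\pi;\tfrac{\alpha}{m}\right)\right]\ \forall t\ge1,\ \forall \pi\in\Pi \right\}.
\]
By \eqref{eq:simultaneous-cs-coverage}, $\PP(E)\ge 1-\alpha$. That inequality rests on Proposition~\ref{prop:lil_cs}: each one-sided CS is applied at level $\alpha/(2m)$, the two tails are combined, and a union bound is taken over the $m$ policies. It therefore suffices to show the deterministic inclusion $E\subseteq\{\Pi^\star\subseteq S_t\ \forall t\ge1\}$. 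Monotonicity of probability then gives the claim.

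To prove the inclusion, work on $E$ and fix an arbitrary $t\ge1$, any $\pi^\dagger\in\Pi^\star$, and any $\pi'\in\Pi$. On $E$ we have the chain
\[
U_t\!\left(\pi^\dagger;\tfrac{\alpha}{m}\right)\ \ge\ \nu(\pi^\dagger)\ =\ \max_{\pi\in\Pi}\nu(\pi)\ \ge\ \nu(\pi')\ \ge\ L_t\!\left(\pi';\tfrac{\alpha}{m}\right).
\]
The first and last inequalities are coverage of the CSs on $E$. The middle step uses the definition of $\Pi^\star$. Since $\pi'$ is arbitrary and $\Pi$ is finite, we may take the maximum over $\pi'$ to obtain $U_t(\pi^\dagger;\alpha/m)\ge\max_{\pi'}L_t(\pi';\alpha/m)$, i.e.\ $\pi^\dagger\in S_t$. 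Because $t$ and $\pi^\dagger$ were arbitrary, $\Pi^\star\subseteq S_t$ for all $t$ simultaneously on $E$.

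There is no substantial obstacle here. The one point needing care is that the coverage event must hold uniformly in $t$ and jointly over all policies, rather than pointwise at each $t$. This is exactly why we start from \eqref{eq:simultaneous-cs-coverage}, whose "for all $t$" lives inside the probability, instead of arguing time by time. A secondary point is bookkeeping: the two-sided interval at level $\alpha/m$ should indeed come from one-sided bounds at $\alpha/(2m)$, so that the union bound totals $\alpha$. Note also that if $\Pi^\star$ has several elements, the argument applies to each one without change, so no uniqueness assumption is needed.
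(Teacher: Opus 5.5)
Your proposal is correct and follows the paper's proof essentially step for step. You restrict to the simultaneous coverage event $E$ from \eqref{eq:simultaneous-cs-coverage}, establish the chain $L_t(\pi';\alpha/m)\le\nu(\pi')\le\nu(\pi^\dagger)\le U_t(\pi^\dagger;\alpha/m)$ on $E$ for every $\pi^\dagger\in\Pi^\star$, take the maximum over $\pi'$, and conclude by monotonicity of probability.
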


The candidate set \(S_t\) may be viewed as a sequential elimination procedure, as summarized in Figure~\ref{fig:method_flowchart}. As data accumulate, policies whose upper confidence bounds fall below the best lower confidence bound are eliminated. A proof can be found in Appendix~\ref{app:proof-optimal-policy-set}. Theorem~\ref{thm:optimal-policy-set} shows that, with probability at least \(1-\alpha\), no optimal policy is ever eliminated. This holds uniformly over time, yielding an anytime-valid candidate set for the optimal policy set.

\begin{figure}[H]
\centering
\makebox[\textwidth][c]{%
\begin{tikzpicture}[
    node distance=1.6cm and 1.1cm,
    >=Latex,
    box/.style={
        draw,
        rounded corners,
        thick,
        align=center,
        minimum height=1.5cm,
        text width=4.8cm,
        inner sep=7pt
    },
    line/.style={thick, ->},
    every node/.style={font=\small}
]

\node[box] (policies) {
    \textbf{Candidate policies}\\
    Define finite policy class\\
    \(\Pi = \{\pi_1,\ldots,\pi_m\}\)
};

\node[box, right=of policies] (data) {
    \textbf{Logged bandit data}\\
    \((X_t, A_t, R_t)_{t \ge 1}\)\\
    collected under logging policies \((h_t)_{t\ge1}\)
};

\node[box, right=of data] (dr) {
    \textbf{DR pseudo-outcomes}\\
    For each \(\pi \in \Pi\), construct\\
    \(\phi_t^{\mathrm{DRL}}(\pi)\) and \(\phi_t^{\mathrm{DRU}}(\pi)\)
};

\node[box, below=of dr] (cs) {
    \textbf{Policy value CSs}\\
    For each policy \(\pi\), construct\\ CS
    \([L_t(\pi),\, U_t(\pi)]\) for \(\nu(\pi)\)
};

\node[box, left=of cs] (st) {
    \textbf{Optimal policy candidate set}\\
    \(S_t = \{\pi \in \Pi :\)\\
    \(U_t(\pi) \ge \max_{\pi' \in \Pi} L_t(\pi')\}\)
};

\node[box, left=of st] (stop) {
    \textbf{Stopping rule}\\
    Stop when \(S_t\) is a singleton\\
    and output the remaining policy
};

\draw[line] (policies) -- (data);
\draw[line] (data) -- (dr);
\draw[line] (dr) -- (cs);
\draw[line] (cs) -- (st);
\draw[line] (st) -- (stop);

\end{tikzpicture}%
}
\caption{
\textbf{Sequential elimination procedure.} Construct policy value CSs from logged contextual bandit data. Use these CSs to construct the candidate set \(S_t\). Policies are eliminated when their upper confidence bounds fall below the best lower confidence bound; once \(S_t\) becomes a singleton, the remaining policy is selected.
}
\label{fig:method_flowchart}
\end{figure}

This formulation of an optimal policy candidate set extends naturally to related identification tasks. For example, by modifying the elimination criterion, the analyst can target Top-\(M\) policy identification or \(\epsilon\)-optimal policy sets, analogous to Top-\(M\) and \(\epsilon\)-optimal objectives in the best-arm identification literature \citep{bubeck_multiple_2012, locatelli_optimal_2016, mason_finding_2020, jun_anytime_2016}. We do not develop these extensions further here, but highlight that the framework naturally accommodates them.

\subsection{Stopping time and sample complexity}

We now turn to the case in which the optimal policy is unique, so that \(\Pi^\star=\pi^\star\). Theorem~\ref{thm:optimal-policy-set} guarantees that \(S_t\) contains \(\pi^\star\) uniformly over time with probability at least \(1-\alpha\), while progressively eliminating suboptimal policies as data accrue. Intuitively, once the underlying policy value CSs have shrunk enough to separate \(\pi^\star\) from every competing policy, the set \(S_t\) will uniquely identify \(\pi^\star\). We next formalize the stopping time at which this occurs and derive a corresponding sample-complexity guarantee.
To derive this guarantee, we begin from the policy value CSs introduced in Proposition~\ref{prop:lil_cs}. Building on the time-uniform concentration arguments of \citet{howard_time-uniform_2021}, \citet{waudby-smith_anytime-valid_2024} show that these CSs shrink at the following rate:
\begin{equation}
U_t\!\left(\pi;\frac{\alpha}{m}\right)-L_t\!\left(\pi;\frac{\alpha}{m}\right)
=
O\!\left(
\frac{\sqrt{\overline V_t(\pi)\bigl(\log\log \overline V_t(\pi)+\log m\bigr)}}{t}
\right)
\label{eq:lil-width-rate-general}
\end{equation}
uniformly across \(\Pi\).

To obtain a tractable sample-complexity guarantee for the stopping time, we impose the following regularity condition on the variance process.

\begin{assumption}[Linear variance growth]
\label{ass:linear-variance-growth}
For each policy \(\pi\in\Pi\), the stabilized variance process satisfies
\(
\overline V_t(\pi)=O(t)
\)
almost surely.
\end{assumption}

\noindent Assumption~\ref{ass:linear-variance-growth} rules out settings in which importance-weighted pseudo-outcomes become increasingly unstable over time, causing the variance process to grow faster than linearly. A sufficient condition is strict overlap: if there exists \(\eta>0\) such that \(h_t(a\mid x)\ge \eta\) for all \(t,x,a\), then the importance weights are uniformly bounded and the variance process grows at most linearly. Under Assumption~\ref{ass:linear-variance-growth}, \eqref{eq:lil-width-rate-general} simplifies to
\begin{equation}
U_t\!\left(\pi;\frac{\alpha}{m}\right)-L_t\!\left(\pi;\frac{\alpha}{m}\right)
=
O\!\left(
\sqrt{\frac{\log\log t+\log m}{t}}
\right).
\label{eq:lil-width-rate-simplified}
\end{equation}

For each suboptimal policy \(\pi\neq\pi^\star\), define its suboptimality gap
\(
\Delta_\pi := \nu(\pi^\star)-\nu(\pi)
\),
and let
\(
\Delta_{\min} := \min_{\pi\neq\pi^\star}\Delta_\pi
\) denote the minimum suboptimality gap over \(\Pi\).
The following theorem defines the stopping time at which \(S_t\) uniquely identifies \(\pi^\star\) and gives its sample-complexity rate.

\begin{theorem}[Stopping-time and sample complexity]
\label{thm:stopping-time}
Suppose that \(\Pi\) is a fixed finite policy class with \(m=|\Pi|\), and that \(\pi^\star\) is a unique optimal policy. Under Assumption~\ref{ass:linear-variance-growth}, fix a probability threshold \(\alpha \in (0, 1)\). Define the stopping time
\[
\tau
:=
\inf\left\{
t\ge1:
\exists \pi' \in \Pi
\text{ such that }
L_t\!\left(\pi';\frac{\alpha}{m}\right)
>
\max_{\pi \neq \pi'}
U_t\!\left(\pi;\frac{\alpha}{m}\right)
\right\}.
\]
Then, with probability at least \(1-\alpha\), \(\tau<\infty\) and
\(
S_\tau = \{\pi^\star\}.
\)
Moreover,
\[
\tau
=
O\!\left(
\frac{\log m + \log\log(1/\Delta_{\min})}{\Delta_{\min}^2}
\right)
\qquad
\text{as } \Delta_{\min} \downarrow 0.
\]
\end{theorem}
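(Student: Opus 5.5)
We work throughout on the good event
\[
\mathcal E := \left\{\nu(\pi)\in\left[L_t\!\left(\pi;\tfrac{\alpha}{m}\right),U_t\!\left(\pi;\tfrac{\alpha}{m}\right)\right]\ \forall t\ge1,\ \forall\pi\in\Pi\right\},
\]
which has probability at least \(1-\alpha\) by \eqref{eq:simultaneous-cs-coverage}. The proof has two parts: a deterministic consequence of \(\mathcal E\) giving correctness of \(S_\tau\), and a width argument based on \eqref{eq:lil-width-rate-simplified} giving finiteness of \(\tau\) and the rate.

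\textbf{Correctness at \(\tau\).} On \(\mathcal E\), suppose \(\tau<\infty\) and let \(\pi'\) be a policy attaining the condition at time \(\tau\). If \(\pi'\neq\pi^\star\), then
\[
\nu(\pi')\ge L_\tau(\pi')>U_\tau(\pi^\star)\ge\nu(\pi^\star),
\]
which contradicts the optimality of \(\pi^\star\). Hence \(\pi'=\pi^\star\). For every \(\pi\neq\pi^\star\) we then have
\[
U_\tau(\pi)<L_\tau(\pi^\star)\le\max_{\pi''}L_\tau(\pi''),
\]
so every such \(\pi\) is excluded from \(S_\tau\). Theorem~\ref{thm:optimal-policy-set}, or directly \(\mathcal E\) via \(U_\tau(\pi^\star)\ge\nu(\pi^\star)\ge\nu(\pi'')\ge L_\tau(\pi'')\) for all \(\pi''\), gives \(\pi^\star\in S_\tau\). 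Therefore \(S_\tau=\{\pi^\star\}\).

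\textbf{Sufficient condition for stopping.} Let \(w_t(\pi):=U_t(\pi)-L_t(\pi)\). On \(\mathcal E\), \(L_t(\pi^\star)\ge\nu(\pi^\star)-w_t(\pi^\star)\) and \(U_t(\pi)\le\nu(\pi)+w_t(\pi)\). So \(L_t(\pi^\star)>U_t(\pi)\) holds whenever \(w_t(\pi^\star)+w_t(\pi)<\Delta_\pi\). In particular it suffices that \(\max_{\pi\in\Pi}w_t(\pi)<\Delta_{\min}/2\), and then \(\tau\le t\).

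\textbf{Rate.} By \eqref{eq:lil-width-rate-simplified}, under Assumption~\ref{ass:linear-variance-growth}, there are a constant \(C\) and a (possibly random but a.s. finite) \(t_0\) such that for all \(t\ge t_0\),
\[
\max_\pi w_t(\pi)\le C\sqrt{(\log\log t+\log m)/t}.
\]
Since the right side tends to zero, the stopping condition eventually holds, so \(\tau<\infty\) on \(\mathcal E\). It remains to find the smallest \(t\) with
\[
C^2(\log\log t+\log m)/t<\Delta^2/4 .
\]
Try \(t^\ast=K(\log m+\log\log(1/\Delta))/\Delta^2\) with \(K\) a large constant. Then \(\log t^\ast=2\log(1/\Delta)+O(\log\log(1/\Delta)+\log\log m)\), so \(\log\log t^\ast\le\log\log(1/\Delta)+O(1)\) as \(\Delta\downarrow0\), up to a \(\log\log\log m\) term that is absorbed into \(\log m\). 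Substituting, the left side becomes at most
\[
\frac{C^2\Delta^2\bigl(\log\log(1/\Delta)+\log m+O(1)\bigr)}{K\bigl(\log m+\log\log(1/\Delta)\bigr)},
\]
which is below \(\Delta^2/4\) once \(K>4C^2(1+o(1))\). Since the left side is eventually decreasing in \(t\), every \(t\ge t^\ast\) with \(t^\ast\ge t_0\) satisfies the condition, which gives \(\tau\le t^\ast\) and the stated rate.

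\textbf{Main obstacle.} The delicate step is this last inversion together with the interpretation of the \(O(\cdot)\) statements. The bound \eqref{eq:lil-width-rate-simplified} is an almost-sure asymptotic with random constants, and Assumption~\ref{ass:linear-variance-growth} holds only almost surely. The rate is therefore meant pathwise on \(\mathcal E\): for each fixed sample path the implied constants are fixed, and we take \(\Delta_{\min}\downarrow0\). I would state this explicitly. I would also note that for small \(\Delta_{\min}\) we have \(t^\ast\to\infty\), so \(t^\ast\ge t_0\) holds automatically, which removes the random burn-in time from the final bound.
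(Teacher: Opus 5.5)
Your proposal is correct and follows the same overall structure as the paper: work on the simultaneous coverage event, use \eqref{eq:lil-width-rate-simplified} to get a constant and a burn-in $t_0$, reduce stopping to $2\cdot(\text{width})<\Delta_{\min}$, and invert.

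\textbf{The inversion step.} This is where you genuinely differ from the paper.
\begin{itemize}
\item The paper proves Lemma~\ref{lemma:critical_lemma}, which exhibits an explicit $T=3KXC$ with $T>K(A+\log\log T)$. This gives a fully explicit, finite bound on $\tau$ for every value of $\Delta_{\min}$, with $t_0$ entering multiplicatively. Only at the end does the paper pass to the asymptotic $O(\cdot)$ statement.
\item You instead plug in $t^\ast=K(\log m+\log\log(1/\Delta))/\Delta^2$ and check the inequality asymptotically. You use the eventual monotonicity of $(\log\log t+\log m)/t$, and the fact that $t^\ast\to\infty$ disposes of the (possibly random) burn-in $t_0$.
\end{itemize}
Your route is shorter and gives a sharper leading constant ($K$ just above $4C^2$). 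However, it yields a bound only for $\Delta_{\min}$ below a threshold, not a nonasymptotic one. That is all the theorem claims, so nothing is lost for the stated result.

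\textbf{Correctness at $\tau$.} Here you make explicit a step the paper leaves implicit. The paper passes from $\tau<\infty$ to $S_\tau=\{\pi^\star\}$ ``by definition of $S_t$''. But $\tau$ is triggered by whichever policy separates first, which need not be the policy shown to separate at the later time. Your contradiction $\nu(\pi')\ge L_\tau(\pi')>U_\tau(\pi^\star)\ge\nu(\pi^\star)$ on the coverage event is exactly what rules out $\pi'\neq\pi^\star$.
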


\noindent Intuitively, uniquely identifying \(\pi^\star\) becomes more difficult as the size of the policy class \(\Pi\) grows, but the dominant source of difficulty is how close the best policy is to its nearest competitor.

We provide a full proof in Appendix~\ref{app:stopping-time-proof}, but the argument is straightforward at a high level. On the simultaneous coverage event from \eqref{eq:simultaneous-cs-coverage}, each policy's CS contains its true value at all times. Under the width rate in \eqref{eq:lil-width-rate-simplified}, the confidence bands shrink uniformly at rate
\[
O\!\left(\sqrt{\frac{\log\log t+\log m}{t}}\right).
\]
Once the relevant CS widths become smaller than \(\Delta_{\min}\), the lower confidence bound of the optimal policy must exceed the upper confidence bound of every suboptimal policy. At that time, every suboptimal policy is eliminated and the set \(S_t\) collapses to \(\{\pi^\star\}\). Solving the inequality
\[
\sqrt{\frac{\log\log t+\log m}{t}} \lesssim \Delta_{\min}
\]
for \(t\) yields the stated rate
\[
\tau
=
O\!\left(
\frac{\log m + \log\log(1/\Delta_{\min})}{\Delta_{\min}^2}
\right).
\]

Together, Theorems~\ref{thm:optimal-policy-set} and~\ref{thm:stopping-time} provide a complementary pair of guarantees. Theorem~\ref{thm:optimal-policy-set} ensures anytime-valid coverage of the true optimal policy set, while Theorem~\ref{thm:stopping-time} gives a corresponding sample-complexity rate for unique identification of the optimal policy.

\section{Empirical results}

We now present three empirical demonstrations of our framework. First, we illustrate how anytime-valid optimal policy identification supports sequential monitoring of both the optimal policy candidate set and the underlying policy values. Second, we show that, relative to a corresponding fixed-\(N\) design, the anytime-valid procedure can deliver substantial sample-efficiency gains through early stopping. Third, we apply the method to the real-world experiment of \citet{offer-westort_battling_2024}, showing that it corroborates the study's conclusions while providing a sequential view of how those conclusions emerge as data accrue.

\subsection{Illustrative example}
\label{sec:illustrative_example}

To illustrate how the anytime-valid procedure operates in practice, we begin with a simple synthetic contextual bandit setting. Contexts are generated \(X_t \sim \mathrm{Uniform}(0,1)^3\). Actions \(A_t \in \{0, 1\}\) are assigned by a logistic logging policy with treatment propensity
\[
h_t(A_t=1\mid X_t)=\mathrm{clip}\!\left(\sigma(w^\top X_t),\,0.10,\,0.90\right),
\]
where \(\sigma\) is the logistic sigmoid function, \(w=[0.346,\,0.822,\,0.331]\) is a fixed coefficient vector, and propensities are clipped to remain between \(0.10\) and \(0.90\).
Rewards are Beta-distributed with conditional mean \(\mu(X_t,A_t)\):
\[
R_t \mid X_t, A_t
\sim
\mathrm{Beta}\!\left(\mu(X_t,A_t),\,1-\mu(X_t,A_t)\right),
\]
where
\[
\mu(X_t,A_t)
=
\beta_{A_t} + 0.1(X_{t1}+X_{t2}+X_{t3}),
\]
with \(\beta_0 = 0.25\) and \(\beta_1 = 0.55\).

We consider a class \(\Pi\) of six candidate policies, consisting of the logging policy and five additional target policies. The optimal target policy \(\pi^\star\) always assigns \(A_t=1\), and the four suboptimal policies are chosen so that their suboptimality gaps
\(
\Delta_\pi
\)
are \(0.05\), \(0.06\), \(0.07\), and \(0.08\), respectively.
We run the experiment for a horizon of \(T=5{,}000\) observations, and instantiate the CSs of Proposition~\ref{prop:lil_cs} with truncation level \(k=0\), \(\widehat\xi_0 = \tfrac{1}{2(1+k)}\), and an ordinary least squares reward predictor \(\widehat r_t(x,a)\).

Figure~\ref{fig:illustrative_example} demonstrates how our framework facilitates simultaneous inference on both the optimal policy \(\pi^\star\) and the underlying policy values \(\nu(\pi)\). The top panel displays the progressive evolution of \(S_t\) as data accumulate: clearly suboptimal policies are eliminated early, while the optimal policy \(\pi^\star\) remains in \(S_t\) throughout and is ultimately uniquely identified. The bottom panel shows the corresponding CSs for the policy values, which converge toward the true values as sample size grows.

\begin{figure}[!htbp]
    \centering
    \includegraphics[width=0.8\textwidth]{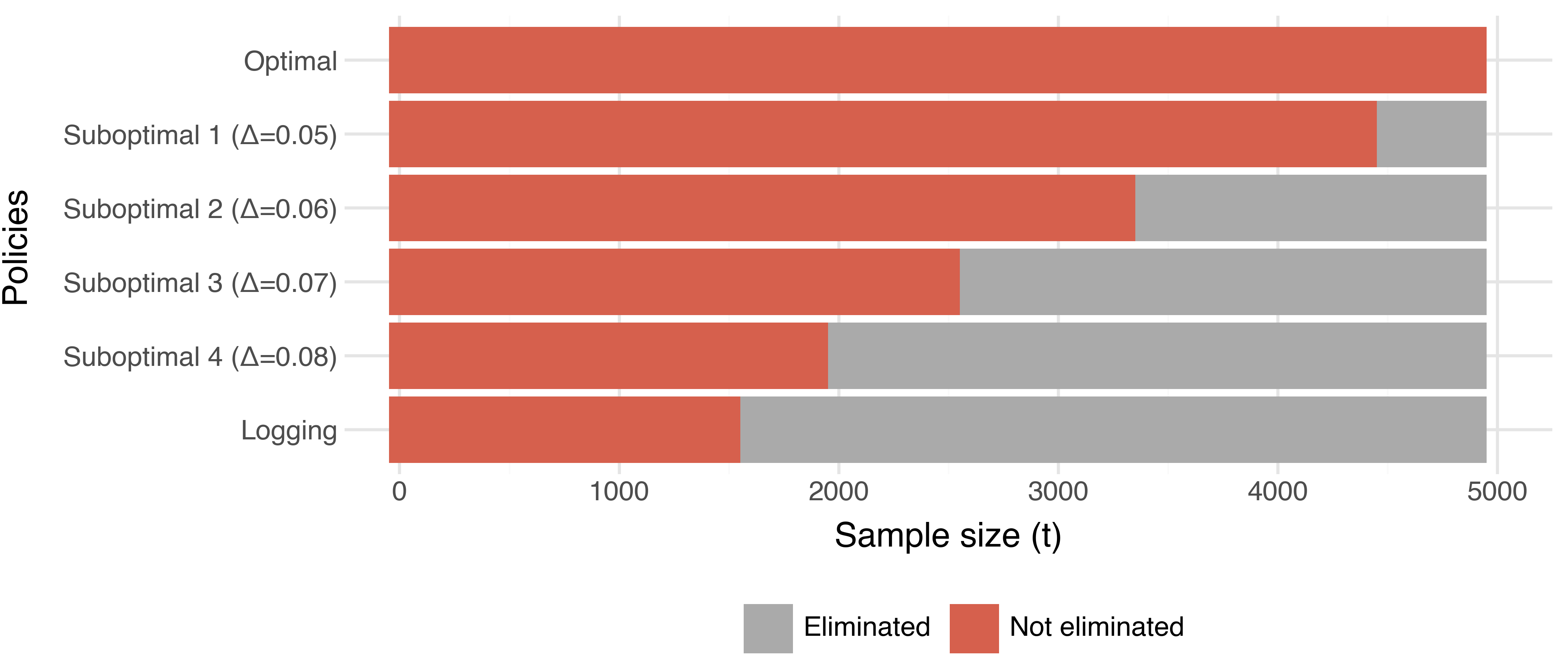}

    \vspace{0.8em}

    \includegraphics[width=0.7\textwidth]{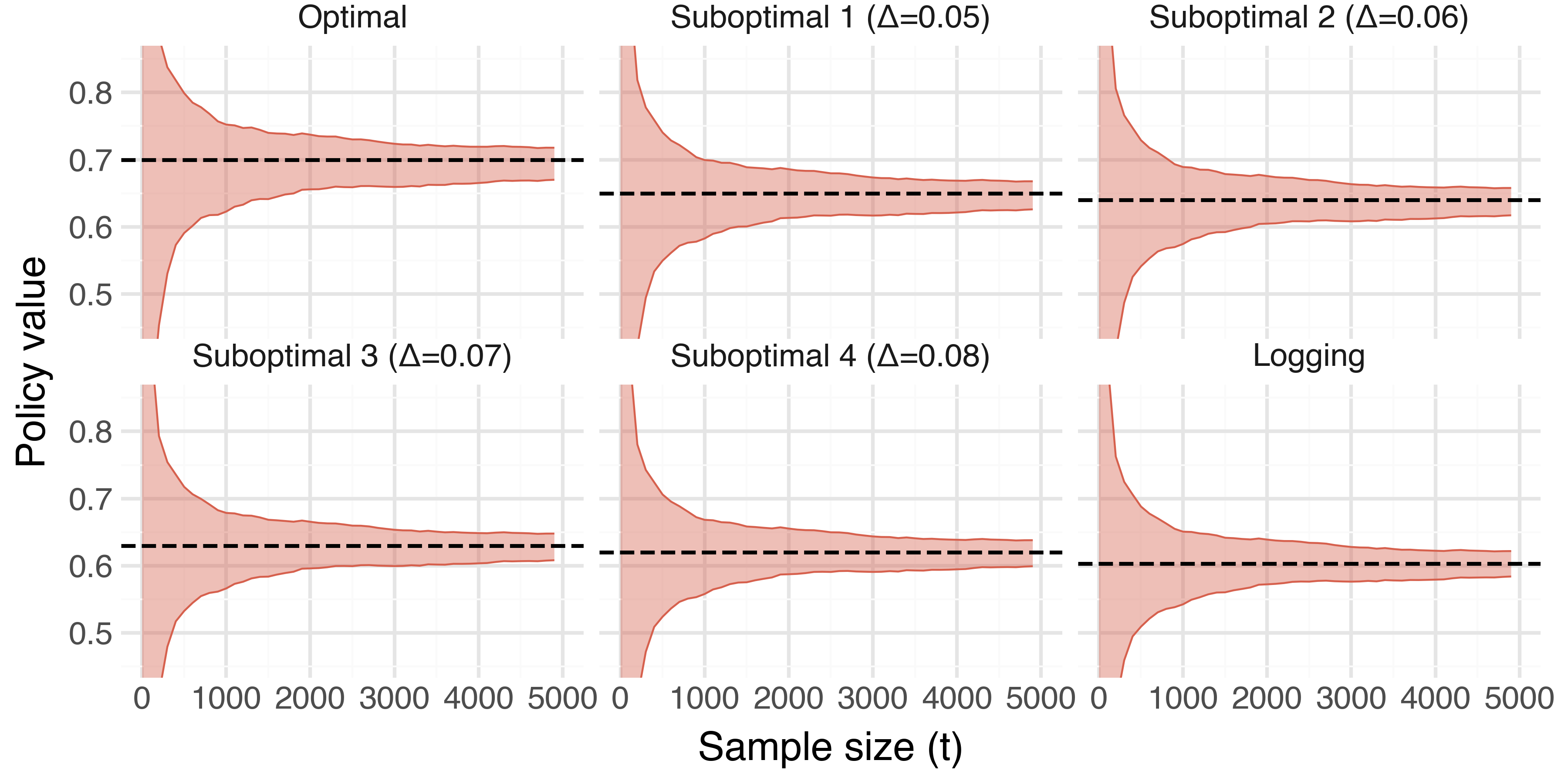}

    \caption{\textbf{Anytime-valid optimal policy identification.} \textit{Top panel}: progressive evolution of the optimal policy candidate set \(S_t\) over \(T=5{,}000\) observations. Clearly suboptimal policies are eliminated early, while the optimal policy \(\pi^\star\) remains in \(S_t\) throughout and is ultimately uniquely identified. \textit{Bottom panel}: corresponding CSs for the policy values \(\nu(\pi)\). Dashed horizontal lines mark the true policy values.}
    \label{fig:illustrative_example}
\end{figure}

\subsection{Anytime-valid sample-efficiency gains}
\label{sec:sample_efficiency}

Choosing an appropriate sample size is a fundamental challenge in experimental design. If a study collects too little data, the analyst may be unable to identify the optimal policy with adequate confidence. If it collects substantially more data than necessary, the analyst incurs unnecessary costs in time, effort, and resources. This issue is especially relevant in our setting, where data are generated by an external logging policy that may not be tuned to the analyst's specific inferential objective. We show that anytime-valid stopping can deliver substantial sample-efficiency gains when the logging policy collects more data than the optimal fixed sample size needed to identify \(\pi^\star\).

Using the same contextual data-generating process and hyperparameter choices as in Section~\ref{sec:illustrative_example}, with policy-class size \(m=10\), we vary the minimum suboptimality gap \(\Delta_{\min}\) to study settings with varying levels of identification difficulty. Specifically, we consider target gaps of \(\Delta_{\min} \in \{0.02, 0.05, 0.10, 0.15, 0.20\}\). For each target gap, we use the fixed-sample off-policy methods of \citet{waudby-smith_anytime-valid_2024} to compute the oracle fixed sample size \(N_{90}\) required to identify \(\pi^\star\) with \(90\%\) probability. If the true gap exceeds the target gap, then the fixed-sample design collects more data than necessary, whereas the anytime-valid procedure can stop as soon as \(\pi^\star\) is identified. We therefore simulate settings in which the true gap exceeds the target gap, over the range
\[
c := \frac{\text{true }\Delta_{\min}}{\text{target }\Delta_{\min}}
\in \{1, 1.25, 1.5, 1.75, 2, 2.5, 3\}.
\]
For each value of \(c\), we run \(1{,}000\) simulations and record the stopping time \(\tau\). We measure sample savings by \(1-\tau/N_{90}\), and report the mean sample savings in Figure~\ref{fig:sample_savings}.

\begin{figure}[!htbp]
    \centering
    \includegraphics[width=0.7\textwidth]{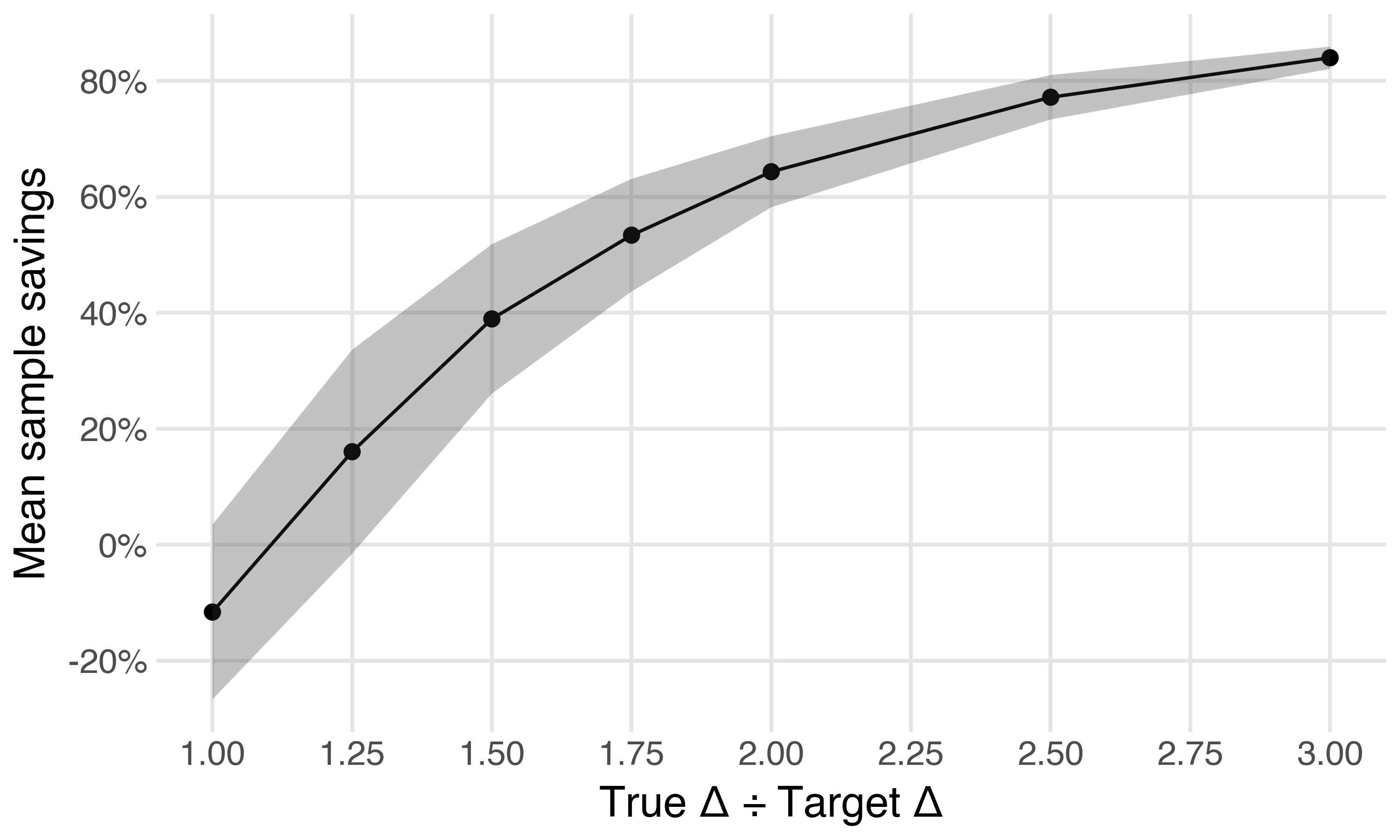}
    \caption{\textbf{Sample savings under target gap misspecification.} The x-axis shows the
    ratio of the true minimum suboptimality gap to the target gap used to determine the oracle fixed sample size; the y-axis
    shows mean sample savings \(1 - \mathbb{E}[\tau]/N_{90}\), averaged across target
    gap values, with \(\pm 1\) standard error ribbon.}
    \label{fig:sample_savings}
\end{figure}

Even when the true gap exceeds the target gap by only 25\%, the anytime-valid procedure can detect \(\pi^\star\) with approximately \(20\%\) less than the planned sample on average.
When the true gap is two to three times larger than the target gap, average savings approach \(60\%\) to \(80\%\).
Concretely, suppose that a study is planned to identify the optimal policy under a target gap of \(\Delta_{\min} = 0.02\). If the true gap is \(\Delta_{\min}=0.04\), the analyst can expect to recover, on average, more than 60\% of the planned sample size.

\subsection{Application: reducing the spread of misinformation online}
\label{sec:infodemic}

Finally, we apply the framework to the adaptive experiment of \citet{offer-westort_battling_2024}, who evaluated eight interventions for reducing the spread of misinformation among social media users in Kenya and Nigeria (\(N = 15{,}292\)). We consider a policy class of eight policies, each of which assigns all users to a single intervention: control, accuracy, deliberation, emotion suppression, pledge, AfricaCheck tips, Facebook tips, or video training. Data were collected using a contextual bandit design. The original study found that the accuracy nudge and Facebook tips policies were the most effective at reducing sharing of misinformation.

\begin{figure}[!htbp]
    \centering
    \includegraphics[width=0.7\textwidth]{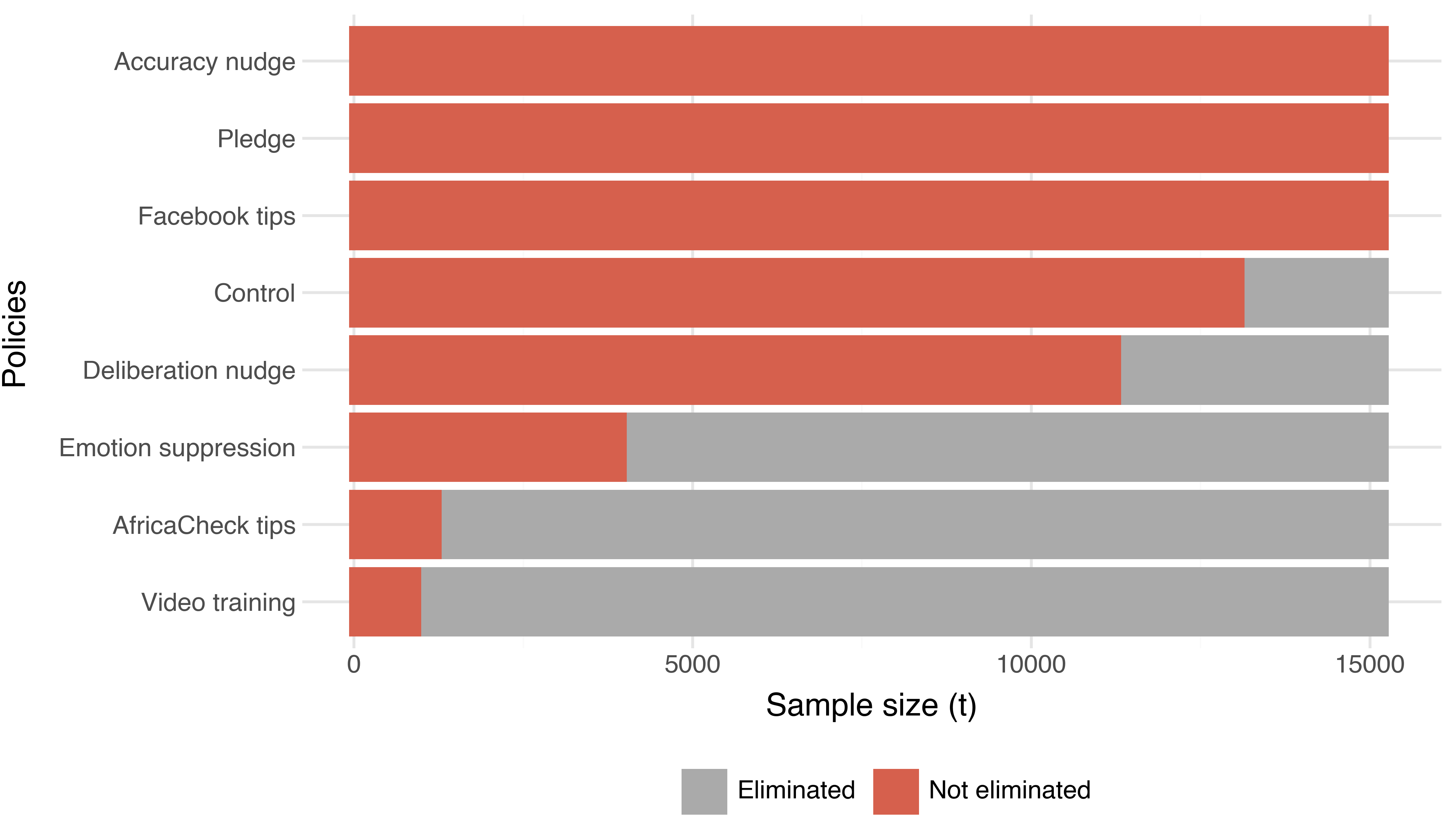}
    \caption{\textbf{Anytime-valid optimal policy identification in the infodemic study.}
    The heatmap displays \(S_t\), the candidate set of optimal policies, over
    \(N = 15{,}292\) observations. Policies surviving to the end of the study (never
    eliminated) are shown at the top; remaining policies are ordered by their elimination
    time, latest to earliest.}
    \label{fig:infodemic}
\end{figure}

Figure~\ref{fig:infodemic} shows how evidence on the optimal policy set \(S_t\) accumulates sequentially. The video training and AfricaCheck tips policies are eliminated very early, at approximately \(t = 1{,}000\), well before a tenth of the data have been collected. The emotion suppression and deliberation nudge policies follow at \(t = 4{,}000\) and \(t = 11{,}000\), and the control policy is eliminated near the end of the study at \(t = 13{,}000\). Three policies are never eliminated: the accuracy nudge policy, the pledge policy, and the Facebook tips policy. This mirrors the original study's conclusion that the accuracy nudge and Facebook tips policies are the top performers in this candidate class.

The pledge policy is more ambiguous: it performs well enough that it cannot be eliminated from \(S_t\) on the basis of the available data, while the five remaining policies are decisively eliminated. Critically, much of this information would have been available long before the study concluded. The worst-performing policies are identified as suboptimal with only a fraction of the final sample, illustrating how anytime-valid monitoring can provide ongoing, actionable guidance throughout a study rather than only at its conclusion.

\section{Broader impacts and limitations}

This work provides practitioners with tools for sequential policy evaluation in settings where data are generated by an externally determined logging policy. This is especially relevant in real-world policy evaluation problems, where analysts often face limited time, resources, and operational flexibility. By enabling valid monitoring as evidence accrues, the framework can help analysts identify clearly suboptimal policies earlier and make more transparent decisions about whether to continue data collection.

Two limitations are worth noting. First, the framework assumes time-invariant policy values. This assumption may fail in settings where the respondent population changes over time or where treatment effects evolve with repeated exposure, for example due to novelty or fatigue effects. Extending the framework to such non-stationary settings is a natural direction for future work.

Second, the inferential guarantees require overlap: each target policy must be absolutely continuous with respect to the logging policy. This allows deterministic target policies, but rules out settings where the logging policy assigns zero probability to actions that a target policy would take. This condition can be violated in practice under highly concentrated or deterministic bandit algorithms, such as UCB-style algorithms. Although this limits compatibility with some adaptive designs, the framework still applies to a broad class of stochastic logging policies commonly used in practice.

\section{Summary}

We develop an anytime-valid framework for optimal policy identification from externally logged contextual bandit data. We construct a time-indexed candidate set \(S_t\) that retains the optimal policy with probability at least \(1-\alpha\), uniformly over all sample sizes. This guarantee permits sequential elimination of clearly suboptimal policies and stopping at arbitrary data-dependent times without invalidating inference. When the optimal policy is unique, we characterize the stopping time at which it is uniquely identified and derive a corresponding sample-complexity bound scaling as \(O\!\bigl((\log m + \log\log(1/\Delta_{\min}))/\Delta_{\min}^2\bigr)\). Empirically, this framework can deliver meaningful sample-efficiency gains over fixed-\(N\) designs and, in a large real-world adaptive experiment, corroborates the original study's conclusions while showing how evidence on the optimal policy accumulates over time.

\section*{Data and replication materials}

A full replication package, including code and data to reproduce all simulations and empirical analyses, is available at \url{https://github.com/dmolitor/av-policy-selection}.

\newpage

\bibliography{references}

\newpage

\appendix

\section{Proof of Theorem~\ref{thm:optimal-policy-set}}
\label{app:proof-optimal-policy-set}

\begin{proof}
For each \(t \ge 1\), define
\[
E(t) := \left\{\nu(\pi) \in \left[L_t\left(\pi; \frac{\alpha}{m}\right), U_t\left(\pi; \frac{\alpha}{m}\right)\right] : \forall \pi \in \Pi \right\},
\]
and let
\[
E := \bigcap_{t\ge1} E(t).
\]
As established in the main text \eqref{eq:simultaneous-cs-coverage}, \(\PP(E) \ge 1-\alpha\). Now fix any \(\pi^\dagger \in \Pi^\star\) and any \(t \ge 1\). If \(E(t)\) holds true, we have that
\[
L_t\left(\pi'; \frac{\alpha}{m}\right) \le \nu(\pi') \le \nu(\pi^\dagger)
\qquad \text{for all } \pi' \in \Pi,
\]
since \(\pi^\dagger\) is optimal. Therefore,
\[
\max_{\pi'\in\Pi} L_t\left(\pi'; \frac{\alpha}{m}\right) \le \nu(\pi^\dagger) \le U_t\left(\pi^\dagger; \frac{\alpha}{m}\right).
\]
Combining these gives
\[
\max_{\pi'\in\Pi} L_t\left(\pi'; \frac{\alpha}{m}\right)
\le
U_t\left(\pi^\dagger; \frac{\alpha}{m}\right),
\]
so \(\pi^\dagger \in S_t\). Since \(\pi^\dagger \in \Pi^\star\) was arbitrary, it follows that when \(E(t)\) holds, \(\Pi^\star \subseteq S_t\). Thus,
\[
E \implies \Pi^\star \subseteq S_t \ \ \forall t\ge1.
\]
Finally,
\[
\PP\left(\Pi^\star \subseteq S_t : \forall t \ge 1\right)
\ge
\PP(E)
\ge
1-\alpha.
\]
\end{proof}

\section{Proof of Theorem~\ref{thm:stopping-time}}
\label{app:stopping-time-proof}

Before diving into the proof of Theorem~\ref{thm:stopping-time} we will state and prove a small but crucial lemma.

\subsection{Intermediate lemma}

\begin{lemma}[]
Suppose \(t \in \mathbb{R}: t \ge C\) for some arbitrary \(C > 1\). Additionally, suppose we have constants \(A, K \ge e\). Define \(L(x):=\log\!\bigl(\log(x)\bigr)\), \(X := A + L(K) + L(A) + C\), and \(T := 3KXC\). 
It follows that
\[
T > K\bigl(A+L(T)\bigr).
\]
Consequently,
\[
t^\star := \inf\left\{ t \in [C, \infty): t > K\bigl(A+L(t)\bigr) \right\} \le T.
\]
\label{lemma:critical_lemma}
\end{lemma}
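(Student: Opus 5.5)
The plan is to reduce the inequality to a bound on the double logarithm \(L(T)\) in terms of \(L(K)\) plus ordinary logarithms of \(3\), \(X\) and \(C\), and then absorb everything into a multiple of \(X\).

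\textbf{Step 1 (reduction).} Since \(T = 3KXC\) and \(K>0\), the claim \(T > K(A + L(T))\) is equivalent to
\[
3XC > A + L(T).
\]

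\textbf{Step 2 (the double log).} This is the step I expect to be the main obstacle, because \(\log\log\) of a product does not split additively. The trick is to factor out \(\log K\), which is at least \(1\) because \(K \ge e\). Also \(3XC > 1\), so \(\log(3XC) > 0\). Hence
\[
\log T = \log K + \log(3XC) \le \log K\,\bigl(1+\log(3XC)\bigr).
\]
Taking logarithms, and using \(\log(1+y)\le y\) for \(y \ge 0\), gives
\[
L(T) \le L(K) + \log\bigl(1+\log(3XC)\bigr) \le L(K) + \log 3 + \log X + \log C.
\]

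\textbf{Step 3 (absorbing terms).} It now suffices to show
\[
A + L(K) + \log 3 + \log X + \log C < 3XC.
\]
Because \(A, K \ge e\), we have \(L(A)\ge 0\) and \(L(K)\ge 0\). Therefore \(A + L(K) \le X - C < X\). The remaining terms are bounded as follows:
\begin{itemize}
\item \(X \ge A \ge e\), so \(\log X \le X/e < X\).
\item \(\log C < C \le X\).
\item \(\log 3 < 1.1\).
\end{itemize}
For a cleaner total, keep the sharper bound \(A + L(K) \le X - C\) and combine it with \(\log C \le C - 1\). The left-hand side is then at most
\[
(X - C) + 1.1 + X/e + (C - 1) < X + X/e + 0.1 < 3X,
\]
where the last inequality uses \(X \ge e\). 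Finally \(3X < 3XC\) since \(C > 1\), which proves \(T > K(A + L(T))\).

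\textbf{Step 4 (consequence).} We have \(T = 3KXC \ge C\), because \(3KX \ge 1\). So \(T \in [C,\infty)\) and \(T\) satisfies the defining inequality of the set. The set is therefore nonempty and contains \(T\), which gives \(t^\star \le T\) directly from the definition of the infimum.
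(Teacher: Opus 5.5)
Your proof is correct. It shares the paper's overall skeleton: bound \(L(T)\) by something linear in \(X\), absorb that into \(3X\), then use \(C>1\) and \(T\ge C\). The difference is in how you handle the key step, the double logarithm of the product.

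The paper dominates each summand of \(\log T=\log 3+\log K+\log X+\log C\) by a multiple of \(e^X\), using \(L(K)\le X\), \(C\le X\) and \(X\ge e+1\). It then takes one more logarithm to get the blunt estimate \(L(T)<X+2\), and finishes with \(A+X+2<3X\), which relies on \(X\ge 3\).

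You instead factor multiplicatively: \(\log K+\log(3XC)\le \log K\,\bigl(1+\log(3XC)\bigr)\), which is valid because \(\log K\ge 1\). Combined with \(\log(1+y)\le y\), this gives the sharper bound \(L(T)\le L(K)+\log 3+\log X+\log C\). You then match each term against a component of \(X\) using only \(\log X\le X/e\) and \(\log C\le C-1\).

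Your route yields a tighter intermediate estimate. It also shows directly that the \(L(K)\) term in \(X\) is there to absorb exactly the \(L(K)\) that survives the factoring, at the cost of tracking terms individually. The paper's route gives up sharpness in exchange for a single uniform estimate that is absorbed in one line.

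There is one cosmetic slip in your final chain: \((X-C)+1.1+X/e+(C-1)\) equals \(X+X/e+0.1\), so that relation should be an equality, not a strict inequality. Strictness still comes from \(\log 3<1.1\) or from the last step \(X+X/e+0.1<3X\), so the argument is unaffected.
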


\begin{proof}
We begin by establishing that, since \(L(K), \ L(A)\ge \log\log(e)=0\), we have \(X \ge e + 1\). Similarly, since \(X \ge e + 1\) and \(K \ge e\), it follows that \(T \ge 3(e^2 + e)\). Next, we will show that \(L(T)\le X+2\). First, note that by construction \(\log \log K := L(K) \le X\) and \(C \le X\) which implies that \(\log K, \ \log C \le e^X\). Next, since \(X \ge e+1\), it follows that \(\log X \le e^X\), and \(\log 3 \le e^{-(e + 1)}e^X\log 3\), so
\begin{align*}
    \log(T) &=\log3 + \log K + \log X + \log C \\
    &\le e^{-(e + 1)}e^X\log 3 +3e^X \\
    &= (e^{-(e + 1)}\log 3 + 3)e^X.
\end{align*}
It follows that
\begin{align*}
    L(T) &:= \log\log T \\
    &= \log(\log 3 + \log K + \log X + \log C) \\
    &\le \log((e^{-(e + 1)}\log 3 + 3)e^X) \\
    &= \log(e^{-(e + 1)}\log 3 + 3) + X \\
    &< X + 2.
\end{align*}
So, \(L(T) \le X + 2\) and as a result \(K(A + L(T)) \le K(A + X + 2)\). Finally, since by construction \(A \le X\) and \(X \ge 3\), we have \(2 \le \frac{2X}{3}\) and thus 
\begin{align*}
    A + X + 2 &\le X + X + \frac{2X}{3} \\
    &<3X.
\end{align*}
Combining these pieces, we see that
\begin{align*}
    K(A + L(T)) &\le K(A + X + 2) \\
    &<3KX \\
    &< 3KXC \\
    &= T.
\end{align*}
Since by construction \(T > C\), then \(t^\star \le T\).
\end{proof}

\subsection{Proof of Theorem~\ref{thm:stopping-time}}

We will now state the proof of Theorem~\ref{thm:stopping-time}.

\begin{proof}
Suppose that the simultaneous coverage event holds:
\[
E := \left\{\nu(\pi) \in \left[L_t\left(\pi; \frac{\alpha}{m}\right), U_t\left(\pi; \frac{\alpha}{m}\right)\right]: \forall t \ge 1, \forall \pi \in \Pi \right\}.
\]
By \eqref{eq:lil-width-rate-simplified} and the finiteness of \(\Pi\), there exist constants \(D>0\) and \(t_0 > 1\) such that, for every policy \(\pi \in \Pi\) and all \(t \ge t_0\),
\[
U_t\left(\pi; \frac{\alpha}{m}\right) - \nu(\pi), \ \nu(\pi) - L_t\left(\pi; \frac{\alpha}{m}\right)
\le
D\sqrt{\frac{\log\log t+\log m}{t}}.
\]
It immediately follows that, for every suboptimal policy \(\pi \neq \pi^\star\) and all \(t \ge t_0\),
\begin{align*}
L_t(\pi^\star)
-
U_t(\pi)
&\ge
\nu(\pi^\star)
-
D\sqrt{\frac{\log\log t+\log m}{t}}
-
\left(
\nu(\pi)+
D\sqrt{\frac{\log\log t+\log m}{t}}
\right) \\
&=
\Delta_\pi - 2D\sqrt{\frac{\log\log t+\log m}{t}}.
\end{align*}
Therefore, for any \(t \ge t_0\),
\[
2D\sqrt{\frac{\log\log t+\log m}{t}} < \Delta_{\min}
\implies
L_t(\pi^\star)\ >\ \max_{\pi \neq \pi^\star} U_t(\pi),
\]
and thus \(\tau \le t\). Squaring and rearranging \(2D\sqrt{\frac{\log\log t+\log m}{t}} < \Delta_{\min}\) gives us
\[
t>\frac{4D^2}{\Delta_{\min}^2}\bigl(\log m+\log\log t\bigr).
\]
Thus, it suffices to find
\[
t' = \inf\left\{ t \in [t_0, \infty): t > \frac{4D^2}{\Delta_{\min}^2}\bigl(\log m + \log\log t\bigr) \right\}.
\]
It immediately follows that it suffices to find
\[
t^\star = \inf\left\{ t \in [t_0, \infty): t > \max\left\{e, \frac{4D^2}{\Delta_{\min}^2}\right\}\bigl(\max\{e, \log m\} + \log\log t\bigr) \right\},
\]
since
\[
\max\left\{e, \frac{4D^2}{\Delta_{\min}^2}\right\}\bigl(\max\{e, \log m\} + \log\log t\bigr)
\ge
\frac{4D^2}{\Delta_{\min}^2}\bigl(\log m + \log\log t\bigr).
\]
Hence \(\tau \le t^\star\). By Lemma~\ref{lemma:critical_lemma}, it also follows that
\begin{align}
\tau
&\le t^\star < T \notag\\
&:= 3t_0\max\left\{e, \frac{4D^2}{\Delta_{\min}^2}\right\}
\left(
\max\{e, \log m\}
+ \log \log \max\left\{e, \frac{4D^2}{\Delta_{\min}^2}\right\}
+ \log \log \max\{e, \log m\}
+ t_0
\right) \notag\\
&\le F\left(
3t_0\frac{4D^2}{\Delta_{\min}^2}
\left(
\log m
+ \log \log \frac{4D^2}{\Delta_{\min}^2}
+ \log \log \max\{e, \log m\}
+ t_0
\right)
\right) \notag\\
&< \infty.
\label{eq:tau_lt_infty}
\end{align}
for a sufficiently large constant \(F>0\) depending only on \(D\) and \(t_0\). Therefore, there exists a constant \(M>0\) such that
\[
\tau
\le
M
\frac{
\log m
+ \log \log \left(1/\Delta_{\min}^2\right)
+ \log \log \max\{e, \log m\}
+ 1
}{\Delta_{\min}^2}.
\]
In particular, since \(\Pi\) is fixed and finite, there exist constants
\(\Delta_0 \in (0,1)\) and \(N>0\) such that for all
\(0<\Delta_{\min}\le \Delta_0\),
\[
\tau \le N \frac{\log m + \log\log(1/\Delta_{\min})}{\Delta_{\min}^2}.
\]
That is,
\[
\tau = O\!\left(\frac{\log m + \log\log(1/\Delta_{\min})}{\Delta_{\min}^2}\right)
\qquad \text{as } \Delta_{\min}\downarrow 0.
\]

\noindent Finally, \(\tau < \infty\) by \eqref{eq:tau_lt_infty}, which implies by definition of \(S_t\) that \(S_\tau = \{\pi^\star\}\). By Equation~\ref{eq:simultaneous-cs-coverage}, \(\PP(E) \ge 1-\alpha\), and thus all demonstrated results also hold with probability at least \(1-\alpha\).
\end{proof}

\end{document}